\documentclass[letterpaper,10pt,conference]{ieeeconf}

\IEEEoverridecommandlockouts
\usepackage{amsmath,amssymb,amsfonts,bm,cite}

\usepackage{amsthm}

\usepackage{url}

\usepackage{algorithm}
\usepackage{algpseudocode}
\usepackage{booktabs}
\usepackage{graphicx}
\usepackage{xcolor}
\usepackage{hyperref}
\usepackage{tikz}

\theoremstyle{plain}

\newtheorem{theorem}{Theorem}
\newtheorem{lemma}{Lemma}

\renewenvironment{proof}[1][Proof]
{\par\noindent\textit{#1.}\ }
{\hfill$\square$\par}

\newcommand{\R}{\mathbb{R}}
\newcommand{\calN}{\mathcal{N}}
\newcommand{\calE}{\mathcal{E}}
\newcommand{\calD}{\mathcal{D}}

\newcommand{\Id}{\bm I}
\newcommand{\one}{\bm 1}
\newcommand{\p}{\bm p}
\newcommand{\q}{\bm q}
\newcommand{\vvec}{\bm v}
\newcommand{\vref}{\bm v^{\rm ref}}
\newcommand{\K}{\bm K}
\newcommand{\kvec}{\bm k}
\newcommand{\Rmat}{\bm R}
\newcommand{\X}{\bm X}

\newcommand{\A}{\bm A}
\newcommand{\E}{\bm E}
\newcommand{\e}{\bm e}
\newcommand{\D}{\bm D}
\newcommand{\Chat}{\hat{\bm C}}

\newcommand{\Diag}{\operatorname{Diag}}

\newcommand{\proj}{\Pi}

\newcommand*\circled[1]{%
\tikz[baseline=(char.base)]{%
\node[shape=circle,draw,inner sep=0.05pt] (char) {#1};}}

\title{Decentralized Gain Learning for Voltage Control \\ via Local Trajectory Convolution}

\author{Yiwei Dong, Wenqi Cui 
\thanks{Yiwei Dong and Wenqi Cui are with the Department of Electrical and Computer Engineering, New York University, NY 11201, USA. (e-mail: yiweidong@nyu.edu; wenqicui@nyu.edu)}%
\thanks{This work is supported in part by  the Henry Luce Foundation.}}

\begin{document}
\maketitle
\thispagestyle{empty}
\pagestyle{empty}
\begin{abstract}
Fast and spatially heterogeneous fluctuations from distributed energy resources call for voltage regulation that is both fast and responsive to changing operating conditions.  Local linear Volt/VAR control offers the advantage of promptly regulating voltages without real-time communication, but its performance depends critically on the choice of control gains. Optimizing these gains for a system-level objective, however, typically requires an accurate network model or centralized communication. In this work, we uncover a structural property that enables such optimization using only local measurements. Specifically, by exploiting the self-adjoint structure of the closed-loop voltage dynamics, we show that each component of the system-level objective gradient can be exactly recovered from a self-convolution of the corresponding bus's voltage deviation trajectory. Building on this result, we develop a simple decentralized gain optimization method in which each bus updates its control gain using only its local data trajectory. Simulation results demonstrate that the proposed approach effectively updates the Volt/VAR control gains in response to changing operating conditions while requiring only local information.


\end{abstract}

\section{Introduction}
\label{sec:introduction}
Voltage regulation in distribution systems is becoming increasingly challenging as operating conditions vary more rapidly and heterogeneously. High penetrations of distributed energy resources can cause local overvoltage~\cite{7244261}, while emerging large loads, such as AI data centers, can introduce rapid power fluctuations~\cite{chen2026electricitydemandgridimpacts}.
These changes require voltage control actions faster than those provided by conventional tap changing transformers and switched capacitors~\cite{lundberg2022local}. At the same time, inverter-based resources (e.g., battery energy storage, solar photovoltaics) provide fast reactive power actuation, making them well suited for distribution system voltage regulation.

A central question is how to coordinate a large number of inverter-based resources to maintain satisfactory voltage profiles across the distribution network. Centralized approaches typically formulate voltage regulation as an optimal power flow problem~\cite{colot2024optimal,dong2026datadrivensuccessivelinearizationoptimal} and achieve strong control performance under a broad range of operating conditions. However, their implementation generally relies on wide-area communication and (or) accurate knowledge of the network model, both of which are difficult to obtain in many distribution systems. Distributed methods reduce the burden of centralized computation by coordinating bus-level reactive power decisions through local computation and communication among neighboring buses~\cite{6930779,8779692,8667359}. Nevertheless, even this level of real-time neighbor-to-neighbor communication can be difficult to support reliably in practical distribution networks.

To enable voltage regulation without relying on communication infrastructure, local Volt/VAR control has been widely studied~\cite{6666945,6760555,zhu2015fast,9091863,lundberg2022local}. A representative approach is droop control, in which each inverter adjusts its reactive power output linearly in response to locally measured voltage deviations. Voltage stability can be ensured by appropriately restricting the controller gains~\cite{6760555,zhu2015fast,9091863}. However, the resulting control performance, measured in terms of cumulative voltage deviations and control effort, depends critically on how these gains are selected. To address this issue, linear-quadratic-regulator-based formulations have been developed to optimize the gains with respect to quadratic performance objectives~\cite{lou2018optimal,jing2025linear}. More recently, data-driven and reinforcement-learning-based approaches have been proposed to reduce dependence on an accurate network model~\cite{10158325,10144580,10336939}. Nevertheless, despite the purely local parameterization of the controllers, the associated optimization or learning procedures typically still require communication or system-wide measurements. Consequently, controller gains are often tuned offline and held fixed during online operation. Such fixed gains may lead to degraded performance when operating conditions vary substantially because of time-varying distributed energy resources and loads~\cite{9963666}.

Recently, decentralized learning has been explored as a means of enabling model-free controller optimization without relying on wide-area communication~\cite{pmlr-v120-furieri20a,pmlr-v120-li20c,10383735, cui2022decentralized, pmlr-v211-jiang23a, jin2024approximate}. Most existing results have been developed in the context of linear–quadratic regulator problems~\cite{pmlr-v120-furieri20a,pmlr-v120-li20c, 10383735}, where distributed learning algorithms update local controller parameters through information exchange among neighboring agents. Achieving fully decentralized learning is substantially more challenging, because the gradient of the global performance objective generally depends on system-wide dynamics and trajectories~\cite{cui2022decentralized}. Some studies have identified specific structural conditions under which exact decentralized learning can be performed without communication~\cite{pmlr-v211-jiang23a, jin2024approximate}. However, these conditions do not hold for the voltage control problem. Fundamentally, optimizing linear Volt/VAR control gains remains a nonconvex problem, and evaluating the corresponding gradients requires either global model information or communication across the network.

This paper focuses on decentralized gain update for linear Volt/VAR control in power distribution systems.
Surprisingly, we found that the structure of the linearized voltage model enables the gradient of a system-level objective with respect to the Volt/VAR control gains to be \textit{exactly} decomposed using only combinations of local trajectories. Our key observation is that, under the linearized distribution flow (LinDistFlow) model~\cite{zhu2015fast}, the transition matrix of the closed-loop voltage deviation dynamics is
self-adjoint in the metric induced by the inverse reactive power-to-voltage sensitivity matrix. Leveraging this structure, for a system-level voltage regulation objective, each bus’s gradient
component can be expressed as a self-convolution
of its own voltage deviation trajectory. 
This result leads to a simple yet effective decentralized gain update scheme, in which each bus updates its local control gain through online gradient descent steps computed solely from locally measured data trajectories.
The resulting scheme is model-free, communication-free, and continuously adaptive to time-varying operating conditions, while provably driving the gain toward a solution of a system-level voltage regulation objective. 








\textbf{Notation.}
Throughout this manuscript, vectors are denoted by lower-case bold symbols, matrices by upper-case bold symbols, and scalars by unbolded symbols. A symbol with subscript $i$ denotes the $i$-th component of the corresponding vector, while $t$ denotes the time step.


\section{Network Model and Controller Design}
\label{sec:model}

Voltage control in a distribution system involves coordinating reactive power injections across the network to maintain bus voltage magnitudes close to a reference value while avoiding excessive control effort. The rationale of Volt/VAR control lies in the physical relationship between nodal power injections and voltages. In distribution systems, this relationship is commonly approximated by the LinDistFlow model. This section begins
with the introduction of LinDistFlow, and then uses it to establish the basis of our controller design.


\subsection{Model}
\label{subsec:lindistflow_model}

Consider a radial distribution system as a connected graph $(\calN, \calE)$, with $\calN:=\{1,\ldots,N\}$ denoting the nonslack buses and $\calE$ denoting lines.
Let
$\vvec_t\in\R^N$ collect the voltage magnitudes at the nonslack buses at time
$t$, and let $\p_t,\q_t\in\R^N$ denote the corresponding active and reactive
power injections. Under the LinDistFlow
model~\cite{zhu2015fast}, it satisfies
\begin{equation}
\label{eq:lindistflow_affine}
    \vvec_t
    =
    v_0\one
    +
    \Rmat\p_t
    +
    \X\q_t,
\end{equation}
where $v_0$ is the slack bus voltage magnitude, and
$\Rmat,\X\in\R^{N\times N}$ are the voltage sensitivity matrices associated
with active and reactive power injections, respectively.

One of the key structural properties leveraged in this paper is that the sensitivity matrices $\Rmat$ and $\X$ are positive definite. Specifically,  the sensitivity matrices reflect both the electrical parameters and the
topology of the feeder, and can be written as
\begin{equation}
\label{eq:RX_graph_form}
    \Rmat
    =
    \Chat^{-\top}\D_r\Chat^{-1},
    \quad
    \X
    =
    \Chat^{-\top}\D_x\Chat^{-1},
\end{equation}
where $\Chat\in\R^{N\times N}$ is the incidence matrix on non-slack buses. Matrices  $\D_r
    :=
    \Diag(r_\ell),
    \D_x
    :=
    \Diag(x_\ell)$ collect the resistance $r_\ell$ and reactance $x_\ell$ of each line $\ell\in\calE$,
respectively. 
Since $r_\ell>0\text{ and } x_\ell>0$ for all $\ell\in\calE$, $\D_r\succ0\text{ and }\D_x\succ0$. In addition, $\Chat$ is nonsingular for connected network. It follows from
\eqref{eq:RX_graph_form} that $\X
    =
    \Chat^{-\top}\D_x\Chat^{-1}
    =
    \X^\top
    \succ0$, and the same applies for $\Rmat$.


    
Equation~\eqref{eq:lindistflow_affine} expresses how active and reactive
power injections determine the voltage profile. Correspondingly, one of the most common approaches to regulate voltage is through adjusting the reactive power $\bm{q}_t$. 
Given a desired
voltage profile $\vref\in\R^N$, the voltage control problem is therefore to
adjust $\q_t$ so that $\vvec_t$ remains close to $\vref$, while limiting the required reactive power effort.

\subsection{Local Volt/VAR Feedback Controller}
\label{subsec:closed_loop_dynamics}

Directly optimizing the reactive power from
\eqref{eq:lindistflow_affine} would generally require
network information and feeder wide measurements. To avoid relying on such
real-time information exchange and network model, a widely adopted control law is linear incremental control~\cite{6666945,zhu2015fast}, which iteratively adjusts local reactive power injections through local voltage deviation as 
\begin{equation}
\label{eq:incremental_controller}
    \q_{t+1}
    =
    \q_t-\K\bm{\tilde{v}}_t,
    \quad
    \K
    :=
    \Diag(k_1,\ldots,k_N),
\end{equation}
where $\bm{\tilde{v}}_t := \vvec_t-\vref$ is the voltage deviation from its reference value.
The diagonal structure in the control gain $\K$ makes the controller locally implementable, namely, $q_{t+1,i}    =
    q_{t,i}    -    k_i\tilde{v}_{t,i}$ at bus $i$.
Once the gains are specified, the controller can be executed independently at every bus without communication.


\subsection{System Level Gain Optimization}
\label{subsec:gain_objective}
The feedback gains directly shape the closed-loop voltage dynamics and
therefore need to be selected carefully. Existing local Volt/VAR controllers
typically use fixed parameters selected offline according to model-based stability conditions~\cite{6760555,zhu2015fast,9091863}. However, overly conservative gains may lead to insufficient voltage regulation, whereas overly aggressive gains can induce excessive control effort. We aim to optimize the control gains to reduce voltage deviations while avoiding excessive control efforts. 
Specifically, let $\kvec
    :=
    (k_1,\ldots,k_N)^\top$ collect the diagonal entries of $\K$. For a finite horizon $T$, we optimize the local gains according to the following objective
\begin{equation}
\label{eq:regularized_objective_main}
    J_\lambda(\kvec)
    :=
    \sum_{t=1}^{T}
    \bm{\tilde{v}}_t^\top\X^{-1}\bm{\tilde{v}}_t
    +
    \frac{\lambda}{2}
    \kvec^\top\kvec,
\end{equation}
where $\lambda>0$ is a regularization parameter that penalizes large feedback gains, which may otherwise lead to excessive control effort. The first term in equation~\eqref{eq:regularized_objective_main} measures the network-wide bus
voltage deviations accumulated over the horizon and also approximates the line loss~\cite{8779692,9091863}.  Moreover, we define the admissible gain set $\calD
    :=
    \prod_{i=1}^{N}
    [k_i^{\min},k_i^{\max}]$, where the bounds reflect inverter reactive power capability limits and are selected to maintain a sufficient stability margin.  The gain optimization problem is then
\begin{equation}
\label{eq:general_gain_objective}
\begin{aligned}
    \min_{\kvec\in\calD}\quad
    & J_\lambda(\kvec) \\
    \mathrm{s.t.}\quad
    & \eqref{eq:lindistflow_affine} \text{ and } \eqref{eq:incremental_controller} \text{ for } t=1,\cdots T.
\end{aligned}
\end{equation}
The optimization~\eqref{eq:general_gain_objective} depends on
the complete network model by construction. Therefore, methods
that optimize local controller parameters generally rely on
centralized computation or system-level information~\cite{lou2018optimal,
jing2025linear,9992647,10158325,10144580}. Such requirements are difficult to support in most distribution systems.

The central result of
this paper is that, despite this networkwide coupling,~\eqref{eq:general_gain_objective}  can be solved through gradient descent where each component of the
objective gradient can be recovered exactly from the voltage trajectory and the current gain
measured at the corresponding bus. Next, Section~\ref{sec:controller} presents the main theorem and the decentralized gain update algorithm. Section~\ref{sec:gradient} proves
the theorem and establishes that the algorithm coincides with projected gradient descent for the system-level objective.

\section{Decentralized Gain Update}
\label{sec:controller}



\subsection{Decentralized Gradient Decomposition}

To design a gain update rule that can optimize the system-level objective, we analyze the closed-loop dynamics of network voltages under the local Volt/VAR control~\eqref{eq:incremental_controller}. Consider a short time interval during which the active-power injection remains constant at $\p_0$. Combining
\eqref{eq:lindistflow_affine} and
\eqref{eq:incremental_controller} yields
\begin{align*}
    \bm{\tilde{v}}_{t+1}
    &=
    v_0\one
    +
    \Rmat\p_0
    +
    \X\q_{t+1}
    -
    \vref
    \\
    &=
    v_0\one
    +
    \Rmat\p_0
    +
    \X\q_t
    -
    \vref
    -
    \X\K\bm{\tilde{v}}_t
    \\
    &=
    (\Id-\X\K)\bm{\tilde{v}}_t.
\end{align*}
Define the transition matrix $\A(\K):=\Id-\X\K$. Then the closed-loop voltage deviation dynamics is
\begin{equation}
\label{eq:trajectory}
    \bm{\tilde{v}}_t=\A(\K)^t\bm{\tilde{v}}_0
    \quad
    t=0,1,2,\ldots .
\end{equation}
The vector $\bm{\tilde{v}}_0$ in \eqref{eq:trajectory} depends on the initial operating point $(\p_0,\q_0)$, whereas $\A(\K)$
governs how this deviation evolves under the current feedback gain. Under this closed-loop dynamics, problem~\eqref{eq:general_gain_objective} can be written as
\begin{equation}
\label{eq:general_gain_objective2}
    \min_{\kvec\in\calD}
     J_\lambda(\kvec)= \sum_{t=1}^{T}
    \left(
        \A(\K)^t\bm{\tilde{v}}_0
    \right)^\top
    \X^{-1}
    \left(
        \A(\K)^t\bm{\tilde{v}}_0
    \right)
    +
    \frac{\lambda}{2}
    \kvec^\top\kvec.
\end{equation}
If gradient information is available, $\bm{k}$ can be optimized through gradient descent. However, 
computing even
a single gradient component $\partial J_\lambda(\bm k)/\partial k_i$ would generally
require access to the network sensitivity matrix $\X$ and feeder wide voltage information.
Interestingly, we find that the structural properties of $\A(\K)$ allow the gradient $\partial J_\lambda(\bm k)/\partial k_i$ at each bus $i$ to be expressed exactly in terms of the convolution of its local voltage deviation trajectory together with its control gain $k_i$. This result enables each bus to compute its component of the global gradient using only local information. We first state the resulting local gradient expression and corresponding algorithm, while the underlying structural property and proof are deferred to Section~\ref{sec:gradient}.

The following theorem shows how the local gradient can be exactly represented using the convolution of locally observed trajectories.


\begin{theorem}
\label{thm:exact_local_gradient} Let $\{\tilde{\bm{v}}_t\}$ be the
closed-loop trajectory generated by the gains $\kvec$
from an arbitrary initial voltage deviation $\tilde{\bm{v}}_0$. Then, for every $i\in\calN$,
the gradient of $J_\lambda(\kvec)$ in
\eqref{eq:regularized_objective_main} with respect to $k_i$ is
\begin{equation}
\label{eq:local_regularized_gradient_main}
    \frac{\partial J_\lambda(\kvec)}{\partial k_i}
    =
    -2
    \sum_{t=1}^{T}
    \sum_{s=0}^{t-1}
    \tilde{v}_{s,i}\tilde{v}_{2t-1-s,i}
    +
    \lambda k_i.
\end{equation}
\end{theorem}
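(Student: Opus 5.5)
We differentiate \eqref{eq:general_gain_objective2} directly and then use the self-adjointness of $\A(\K)$ with respect to the $\X^{-1}$ inner product to collapse the result into a local quantity. The key structural fact is
\begin{equation*}
    \X^{-1}\A(\K) = \X^{-1}-\K = \A(\K)^\top\X^{-1},
\end{equation*}
which holds because $\X$ is symmetric and $\K$ is diagonal. Iterating this identity gives $\X^{-1}\A^m=(\A^\top)^m\X^{-1}$ for every $m\ge 0$, where we abbreviate $\A:=\A(\K)$. We also note that $\partial \A/\partial k_i = -\X\e_i\e_i^\top$, where $\e_i$ denotes the $i$-th standard basis vector.

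\textbf{Step 1 (product rule).} Write $f_t:=\tilde{\bm v}_0^\top(\A^t)^\top\X^{-1}\A^t\tilde{\bm v}_0$. Using the symmetry of $\X^{-1}$,
\begin{equation*}
    \frac{\partial f_t}{\partial k_i}
    = 2\,\tilde{\bm v}_0^\top(\A^t)^\top\X^{-1}\sum_{s=0}^{t-1}\A^{t-1-s}\frac{\partial \A}{\partial k_i}\A^{s}\tilde{\bm v}_0
    = -2\sum_{s=0}^{t-1}\tilde{\bm v}_t^\top\X^{-1}\A^{t-1-s}\X\e_i\e_i^\top\tilde{\bm v}_s .
\end{equation*}
The last factor is simply $\e_i^\top\tilde{\bm v}_s=\tilde v_{s,i}$.

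\textbf{Step 2 (self-adjointness).} Move the $\X^{-1}$ past $\A^{t-1-s}$:
\begin{equation*}
    \tilde{\bm v}_t^\top\X^{-1}\A^{t-1-s}\X
    = \tilde{\bm v}_t^\top(\A^{t-1-s})^\top\X^{-1}\X
    = \bigl(\A^{t-1-s}\tilde{\bm v}_t\bigr)^\top .
\end{equation*}
Since $\A^{t-1-s}\tilde{\bm v}_t=\A^{2t-1-s}\tilde{\bm v}_0=\tilde{\bm v}_{2t-1-s}$, each summand becomes $\tilde v_{2t-1-s,i}\,\tilde v_{s,i}$.

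\textbf{Step 3 (assemble).} Summing over $t=1,\dots,T$ and adding the regularizer's derivative $\lambda k_i$ yields \eqref{eq:local_regularized_gradient_main}.

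The only substantive step is Step 2. Without the weighting $\X^{-1}$ in the cost, the matrix $\X$ would remain sandwiched in the expression, and the gradient would depend on the full voltage vector rather than a single bus. The weighting is exactly what turns the matrix power into a time shift of the trajectory. Note also that the formula uses indices up to $2T-1$, so it implicitly requires the closed-loop trajectory, which is governed by \eqref{eq:trajectory} with fixed $\p_0$, to be observed over $2T-1$ steps. We would state this explicitly. Apart from that, the argument is routine differentiation of matrix powers.
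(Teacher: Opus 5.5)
Your proposal is correct and matches the paper's proof essentially step for step. Both use the product rule for $\partial \A(\K)^t/\partial k_i$, the self-adjointness identity $\A(\K)^\top\X^{-1}=\X^{-1}\A(\K)$ extended to all powers (Lemma~\ref{lem:self_adjoint_power}, proved by the same induction you describe as iteration), and the time shift $\A(\K)^{t-1-s}\tilde{\bm v}_t=\tilde{\bm v}_{2t-1-s}$. Your remark that samples up to index $2T-1$ are needed also agrees with how the paper's algorithm collects its local trajectory.
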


Theorem~\ref{thm:exact_local_gradient} reveals that a $2T$-sample local trajectory exactly represents the gradient component of a $T$-step system-level objective. The right hand side of
\eqref{eq:local_regularized_gradient_main} depends only on the voltage samples
collected at bus $i$ and its own gain $k_i$. Thus, a gradient component
of the system-level objective can be evaluated without direct access to the
network sensitivity matrix or measurements from other buses. 

\subsection{Algorithms}
This local
gradient representation in \eqref{eq:local_regularized_gradient_main} leads directly to a simple decentralized gain update
procedure. Let superscript $m=0,1,2,\ldots$ index the gain update rounds. During round $m$, each bus holds its current gain $k_i^{(m)}$ fixed while
collecting its local voltage trajectory. At the end of the round, it computes
an update direction and updates its gain. 

\begin{algorithm}[t]
\caption{Decentralized Gain Update at Bus $i$}
\label{alg:local_gain_update}
\begin{algorithmic}[1]
\Require Horizon $T$, step size $\alpha>0$, regularization parameter
$\lambda>0$, gain interval $[k_i^{\min},k_i^{\max}]$, and initial gain
$k_i^{(0)}$
\For{$m=0,1,2,\ldots$}
    \State Measure and store the initial  voltage deviation
    $\tilde{v}_{0,i}^{(m)}$
    \For{$t=0,\ldots,2T-2$}
        \State Apply the incremental Volt/VAR feedback law
        \begin{equation*}
        \label{eq:controller_round_local}
            q_{t+1,i}^{(m)}
            =
            q_{t,i}^{(m)}
            -
            k_i^{(m)}\tilde{v}_{t,i}^{(m)}
        \end{equation*}
        \State Measure and store $\tilde{v}_{t+1,i}^{(m)}$
    \EndFor
    \State Compute the local update direction
    \begin{equation}
    \label{eq:local_update_signal}
        d_i^{(m)}
        :=
        -2
        \sum_{t=1}^{T}
        \sum_{s=0}^{t-1}
        \tilde{v}_{s,i}^{(m)}
        \tilde{v}_{2t-1-s,i}^{(m)}
        +
        \lambda k_i^{(m)}
    \end{equation}
    \State Update the local gain
    \begin{equation}
    \label{eq:projected_gain_update}
        k_i^{(m+1)}
        =
        \proj_{[k_i^{\min},k_i^{\max}]}
        \left(
            k_i^{(m)}
            -
            \alpha d_i^{(m)}
        \right)
    \end{equation}
\EndFor
\end{algorithmic}
\end{algorithm}

The
complete procedure is summarized in
Algorithm~\ref{alg:local_gain_update}, which is executed independently at
each bus. The reactive power update follows the local incremental control introduced in~\eqref{eq:incremental_controller}. 
The update direction in
\eqref{eq:local_update_signal} is computed solely from voltage
measurements collected at the same bus and its current gain. The projection
in \eqref{eq:projected_gain_update} guarantees that the updated gain remains
within its prescribed interval. Thus, every step of the algorithm can be implemented using only
locally available information. It requires neither prior knowledge of the network
topology, line parameters, voltage sensitivity matrices nor communication
among buses. 

\section{Optimization Interpretation}
\label{sec:gradient}

This section shows the proof of Theorem~\ref{thm:exact_local_gradient} and the convergence of the decentralized gradient update  in Algorithm~\ref{alg:local_gain_update}. 

\subsection{Decentralized Gradient Derivation}

To reveal how the local gradient $\partial J_\lambda(\bm k)/\partial k_i$ can be exactly represented using local trajectories, we start by deriving an explicit representation of the gain gradient using the chain rule applied to
\eqref{eq:regularized_objective_main}. For each local gain $k_i$, 
\begin{equation}
    \frac{\partial J_\lambda}{\partial k_i}(\bm k)
    =
    2
    \sum_{t=1}^{T}
    \bm{\tilde{v}}_t^\top
    \X^{-1}
    \frac{\partial \bm{\tilde{v}}_t}{\partial k_i}
    +
    \lambda k_i.
    \label{eq:gradJ}
\end{equation}

    Thus, it remains to characterize how the voltage trajectory $\bm{\tilde{v}}_t$
depends on $k_i$.
 Let $\e_i$ denote the $i$-th standard basis vector and define $\E_i:=\e_i\e_i^\top$. Since $k_i$ appears only in the $i$-th diagonal entry of $\K$,
\begin{equation*}
    \frac{\partial\K}{\partial k_i}
    =
    \E_i,
    \quad
    \frac{\partial\A(\K)}{\partial k_i}
    =
    -\X\E_i.
\end{equation*}

This relation gives us a compact representation of $\partial \bm{\tilde{v}}_t/{\partial k_i}$ by differentiating $\bm{\tilde{v}}_t=\A(\K)^t\bm{\tilde{v}}_0$:
\begin{equation}
\label{eq:de_dki}
\begin{aligned}
    \frac{\partial\bm{\tilde{v}}_t}{\partial k_i}
    &=
    \frac{\partial\A(\K)^t}{\partial k_i}\bm{\tilde{v}}_0\\
   & =
    \sum_{s=0}^{t-1}
    \A(\K)^{t-1-s}
    \frac{\partial\A(\K)}{\partial k_i}
    \A(\K)^s\bm{\tilde{v}}_0
    \\
    &=
    -
    \sum_{s=0}^{t-1}
    \A(\K)^{t-1-s}
    \X\E_i\bm{\tilde{v}}_s,
    \quad
    t\geq1.
\end{aligned}
\end{equation}
The second equality follows from the product rule for matrix powers,
and the last equality uses
$\A(\K)^s\bm{\tilde{v}}_0=\bm{\tilde{v}}_s$. 
Substituting~\eqref{eq:de_dki} into~\eqref{eq:gradJ} yields an explicit expression for each gradient component. However, this expression still requires the system information encoded in $\bm{X}$  and voltage measurements $\bm{\tilde{v}}$ across the network.
The following lemma establishes a property that  further decouples the voltage-related gradient term.

\begin{lemma}
\label{lem:self_adjoint_power}
 The
transition matrix satisfies $\A(\K)^\top\X^{-1}
    =
    \X^{-1}\A(\K)$. Moreover, for every nonnegative integer $r$,
\begin{equation}
\label{eq:self_adjoint_power}
    (\A(\K)^r)^\top\X^{-1}
    =
    \X^{-1}\A(\K)^r,
    \quad
    r\in\mathbb{Z}_{\geq0}.
\end{equation}
\end{lemma}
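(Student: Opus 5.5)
The plan is a direct algebraic computation, followed by induction on $r$. We only need that $\X$ is symmetric and invertible, both established in Section~\ref{subsec:lindistflow_model} via \eqref{eq:RX_graph_form}, and that $\K$ is diagonal and hence symmetric.

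For the base identity, expand $\A(\K)=\Id-\X\K$ and transpose, using $\X^\top=\X$ and $\K^\top=\K$. This gives $\A(\K)^\top=\Id-\K\X$. Right-multiplying by $\X^{-1}$ yields
\begin{equation*}
\A(\K)^\top\X^{-1}=\X^{-1}-\K .
\end{equation*}
Separately,
\begin{equation*}
\X^{-1}\A(\K)=\X^{-1}-\X^{-1}\X\K=\X^{-1}-\K .
\end{equation*}
The two expressions agree, which proves the first claim. Equivalently, $\X^{-1}\A(\K)=\X^{-1}-\K$ is a symmetric matrix. This means $\A(\K)$ is self-adjoint in the inner product $\langle \bm x,\bm y\rangle=\bm x^\top\X^{-1}\bm y$, where $\X^{-1}\succ0$ is a genuine inner product because $\X\succ0$.

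For the powers, induct on $r$. The case $r=0$ is trivial, since $\Id^\top\X^{-1}=\X^{-1}\Id$. The case $r=1$ is the base identity. Assume the identity holds for $r$. Then
\begin{equation*}
(\A(\K)^{r+1})^\top\X^{-1}=\A(\K)^\top(\A(\K)^r)^\top\X^{-1}=\A(\K)^\top\X^{-1}\A(\K)^r .
\end{equation*}
Applying the base identity to the first two factors gives $\X^{-1}\A(\K)\A(\K)^r=\X^{-1}\A(\K)^{r+1}$, which closes the induction.

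There is no genuine obstacle here. The only point needing care is to invoke the symmetry of $\X$, which comes from the factorization \eqref{eq:RX_graph_form}, together with the diagonal structure of $\K$. Without both, the two sides would differ by $\K-\K^\top$ or a similar term, and the identity would fail.
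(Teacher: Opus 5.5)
Your proposal is correct and follows the paper's proof essentially step for step. You use the same direct computation showing both sides equal $\X^{-1}-\K$ via $\X^\top=\X$ and $\K^\top=\K$, and the same induction $(\A(\K)^{r+1})^\top\X^{-1}=\A(\K)^\top(\A(\K)^r)^\top\X^{-1}=\A(\K)^\top\X^{-1}\A(\K)^r=\X^{-1}\A(\K)^{r+1}$.
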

\begin{proof} We first show that the transition matrix is self-adjoint
under the inner product induced by $\X^{-1}$. Since $\X=\X^\top$, $\K=\K^\top$, and $\A(\K)=\Id-\X\K$, we have
\begin{equation}
\label{eq:self_adjoint_A}
\begin{aligned}
    \A(\K)^\top\X^{-1}
    &=
    (\Id-\K\X)\X^{-1}
    =
    \X^{-1}-\K
    \\
    &=
    \X^{-1}(\Id-\X\K)
    =
    \X^{-1}\A(\K).
\end{aligned}
\end{equation}

This self-adjoint property extends to every nonnegative integer
power of $\A(\K)$, and we prove this by induction. For $r=0$, $(\A(\K)^0)^\top\X^{-1}
    =
    \X^{-1}
    =
    \X^{-1}\A(\K)^0$. Suppose that $(\A(\K)^r)^\top\X^{-1}
    =
    \X^{-1}\A(\K)^r$ holds for some $r\in\mathbb{Z}_{\geq0}$. Then $(\A(\K)^{r+1})^\top\X^{-1}
    \stackrel{\circled{1}}{=}(\A(\K)^r\A(\K))^\top\X^{-1} \stackrel{\circled{2}}{=} \A(\K)^\top(\A(\K)^r)^\top\X^{-1} \stackrel{\circled{3}}{=}
    \A(\K)^\top\X^{-1}\A(\K)^r
    \stackrel{\circled{4}}{=}
    \X^{-1}\A(\K)^{r+1}$, where $\circled{3}$ follows from the induction hypothesis and
$\circled{4}$ follows from \eqref{eq:self_adjoint_A}. Hence, $(\A(\K)^r)^\top\X^{-1}
    =
    \X^{-1}\A(\K)^r$ holds for every $r\in\mathbb{Z}_{\geq0}$.    
\end{proof}

\label{subsec:exact_gradient}
Lemma~\ref{lem:self_adjoint_power} allows the terms involving transposed
powers of the transition matrix in the gradient expression $\partial J_\lambda(\bm k)/\partial k_i$ to be rewritten
using the convolution of the voltage trajectory. 
Specifically, using
\eqref{eq:de_dki} and the fact $\E_i\bm{\tilde{v}}_s
=
\tilde{v}_{s,i}\e_i$, we obtain
\begin{align}
    \frac{\partial}{\partial k_i}
    \left(
        \bm{\tilde{v}}_t^\top
        \X^{-1}
        \bm{\tilde{v}}_t
    \right)
    &\stackrel{\circled{1}}{=}
    2
    \bm{\tilde{v}}_t^\top
    \X^{-1}
    \frac{\partial\bm{\tilde{v}}_t}{\partial k_i}
    \notag\\
    &\stackrel{\circled{2}}{=}
    -2
    \sum_{s=0}^{t-1}
    \tilde{v}_{s,i}
    \bm{\tilde{v}}_t^\top
    \X^{-1}
    \A(\K)^{t-1-s}
    \X\e_i
    \notag\\
    &\stackrel{\circled{3}}{=}
    -2
    \sum_{s=0}^{t-1}
    \tilde{v}_{s,i}
    \left(
        \A(\K)^{t-1-s}
        \bm{\tilde{v}}_t
    \right)_i
    \notag\\
    &\stackrel{\circled{4}}{=}
    -2
    \sum_{s=0}^{t-1}
    \tilde{v}_{s,i}
    \tilde{v}_{2t-1-s,i},
    \label{eq:gain_gradient_step1}
\end{align}
where $\circled{3}$ follows from
$\X^{-1}\A(\K)^{t-1-s}\X
=
(\A(\K)^{t-1-s})^\top$,
which is implied by
\eqref{eq:self_adjoint_power}. For $\circled{4}$, equation
\eqref{eq:trajectory} gives $\A(\K)^{t-1-s}\bm{\tilde{v}}_t
    =
    \A(\K)^{2t-1-s}\bm{\tilde{v}}_0
    =
    \bm{\tilde{v}}_{2t-1-s}$. Finally, summing \eqref{eq:gain_gradient_step1} over the horizon and
including the derivative of the gain regularization term gives
\begin{equation}
\begin{aligned}
    \frac{\partial J_\lambda}{\partial k_i}
    &=
    \sum_{t=1}^{T}
    \frac{\partial}{\partial k_i}
    \left(
        \bm{\tilde{v}}_t^\top
        \X^{-1}
        \bm{\tilde{v}}_t
    \right)
    +
    \lambda k_i
    \\
    &=
    -2
    \sum_{t=1}^{T}
    \sum_{s=0}^{t-1}
    \tilde{v}_{s,i}
    \tilde{v}_{2t-1-s,i}
    +
    \lambda k_i,
\end{aligned}
\end{equation}
which proves~\eqref{eq:local_regularized_gradient_main}.

\subsection{Projected Gradient Descent}

The projected gradient descent update is
$T_\alpha(\kvec)
    :=
    \proj_{\calD}
    \left(
        \kvec-\alpha\nabla J_\lambda(\kvec)
    \right)$,
where $\proj_{\calD}$ denotes the Euclidean projection onto $\calD$.
Since $\calD$ is the Cartesian product of the local gain intervals, this
projection decomposes into independent projections of each gain $k_i$ onto
$[k_i^{\min},k_i^{\max}]$. Therefore, the collection of individual gain
updates in \eqref{eq:projected_gain_update} satisfies
$\kvec^{(m+1)}
    =
    T_\alpha
    \left(
        \kvec^{(m)}
    \right)$. The proposed gain update algorithm is therefore exactly projected
gradient descent for ~\eqref{eq:general_gain_objective2}. Note that $J_\lambda$ is twice continuously differentiable and $\calD$ is compact, its Hessian is bounded on $\calD$. Thus, its gradient is
Lipschitz continuous and convergence properties of projected gradient methods apply. The main result is as follows. 

\begin{theorem}
\label{thm:descent_stationarity} Let $L_\lambda$ denote a Lipschitz
constant of $\nabla J_\lambda$, $\alpha$ is the stepsize in \eqref{eq:projected_gain_update}, and define the mapping $ G_\alpha(\kvec)
    :=
    \frac{1}{\alpha}
    \left[
        \kvec
        -
        T_\alpha(\kvec)
    \right]$.
Suppose the stepsize condition  $0<\alpha<\frac{2}{L_\lambda}$ holds. Then the gain sequence generated by Algorithm~\ref{alg:local_gain_update} satisfies
\begin{equation}
\label{eq:descent_bound}
\begin{aligned}
    J_\lambda
    \left(
        \kvec^{(m+1)}
    \right)
    &\leq
    J_\lambda
    \left(
        \kvec^{(m)}
    \right) - \alpha
    \left(
        1-\frac{L_\lambda\alpha}{2}
    \right)
    \left\|
        G_\alpha
        \left(
            \kvec^{(m)}
        \right)
    \right\|^2
\end{aligned}
\end{equation}
for every round $m$. Moreover,
\begin{equation}
\label{eq:pg_mapping_convergence}
\lim_{m\rightarrow\infty}
    \left\|
        G_\alpha
        \left(
            \kvec^{(m)}
        \right)
    \right\|
    =
    0,
\end{equation}
and every limit point of
$\{\kvec^{(m)}\}$ is a first order stationary point of
$J_\lambda$ over $\calD$.
\end{theorem}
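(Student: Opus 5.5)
Since Theorem~\ref{thm:exact_local_gradient} shows that the collection of local updates \eqref{eq:projected_gain_update} coincides with $\kvec^{(m+1)}=T_\alpha(\kvec^{(m)})$, the claim reduces to the standard analysis of projected gradient descent for an $L_\lambda$-smooth function over the closed convex box $\calD$. The plan has three steps: the descent inequality \eqref{eq:descent_bound}, a telescoping argument for \eqref{eq:pg_mapping_convergence}, and a continuity argument at limit points.

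\textbf{Step 1 (descent inequality).} Abbreviate $\kvec=\kvec^{(m)}$ and $\kvec^+=T_\alpha(\kvec)=\kvec-\alpha G_\alpha(\kvec)$. By the descent lemma for $L_\lambda$-Lipschitz gradients,
\begin{equation*}
J_\lambda(\kvec^+)\le J_\lambda(\kvec)-\alpha\nabla J_\lambda(\kvec)^\top G_\alpha(\kvec)+\tfrac{L_\lambda\alpha^2}{2}\|G_\alpha(\kvec)\|^2 .
\end{equation*}
Next, apply the variational characterization of the projection onto $\calD$. Since $\kvec\in\calD$,
\begin{equation*}
\big(\kvec-\alpha\nabla J_\lambda(\kvec)-\kvec^+\big)^\top(\kvec-\kvec^+)\le 0 .
\end{equation*}
Substituting $\kvec-\kvec^+=\alpha G_\alpha(\kvec)$ gives $\nabla J_\lambda(\kvec)^\top G_\alpha(\kvec)\ge\|G_\alpha(\kvec)\|^2$. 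Inserting this bound into the descent lemma yields \eqref{eq:descent_bound}. One technical point: the Lipschitz constant $L_\lambda$ is only asserted on $\calD$. This suffices, because the descent lemma is applied along the segment $[\kvec,\kvec^+]\subset\calD$ by convexity.

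\textbf{Step 2 (vanishing gradient mapping).} The condition $\alpha<2/L_\lambda$ makes $c:=\alpha(1-L_\lambda\alpha/2)>0$. Sum \eqref{eq:descent_bound} over $m=0,\dots,M$ to get
\begin{equation*}
c\sum_{m=0}^{M}\|G_\alpha(\kvec^{(m)})\|^2\le J_\lambda(\kvec^{(0)})-J_\lambda(\kvec^{(M+1)}) .
\end{equation*}
The right-hand side is bounded above by $J_\lambda(\kvec^{(0)})$, because $J_\lambda\ge0$: both $\X^{-1}\succ0$ and $\lambda>0$. Hence the series $\sum_m\|G_\alpha(\kvec^{(m)})\|^2$ converges, which gives \eqref{eq:pg_mapping_convergence}.

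\textbf{Step 3 (stationarity of limit points).} Let $\bar\kvec$ be a limit of a subsequence $\kvec^{(m_j)}$; it lies in $\calD$ because $\calD$ is compact. The map $G_\alpha$ is continuous, since $\nabla J_\lambda$ is continuous and $\proj_\calD$ is nonexpansive. Therefore $G_\alpha(\bar\kvec)=0$, i.e.\ $\bar\kvec=\proj_\calD(\bar\kvec-\alpha\nabla J_\lambda(\bar\kvec))$. By the projection characterization this is equivalent to
\begin{equation*}
\nabla J_\lambda(\bar\kvec)^\top(\kvec-\bar\kvec)\ge0\quad\text{for all }\kvec\in\calD ,
\end{equation*}
which is first-order stationarity over $\calD$.

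\textbf{Main obstacle.} Given the projected gradient descent identification, none of the steps is hard. The only point needing care is the smoothness on $\calD$ assumed in the statement: that $J_\lambda$ is $C^2$ (it is polynomial in $\kvec$ by \eqref{eq:general_gain_objective2}) and that its gradient is Lipschitz on the compact set $\calD$. That justifies using the descent lemma only on segments inside $\calD$, as done in Step 1.
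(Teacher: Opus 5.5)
Your proof is correct and matches the paper's argument step for step: the projection inequality gives $\nabla J_\lambda(\kvec)^\top G_\alpha(\kvec)\ge\|G_\alpha(\kvec)\|^2$, then the descent lemma, telescoping, and continuity of $G_\alpha$ at limit points. The only minor differences are that you bound the telescoped sum using $J_\lambda\ge 0$ where the paper uses boundedness of the continuous $J_\lambda$ on the compact set $\calD$, and that you explicitly justify applying the descent lemma on the segment $[\kvec,\kvec^+]\subset\calD$; like the paper, this tacitly requires $\kvec^{(0)}\in\calD$.
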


Theorem~\ref{thm:descent_stationarity} follows from the theory of projected gradient descent. The proof is included in
Appendix~\ref{app:descent_stationarity} for completeness.

\section{Case Study}
\label{sec:case_study}

\begin{figure}[t]
  \centering

  \begin{minipage}[t]{0.499\linewidth}
    \centering
    \includegraphics[width=\linewidth]{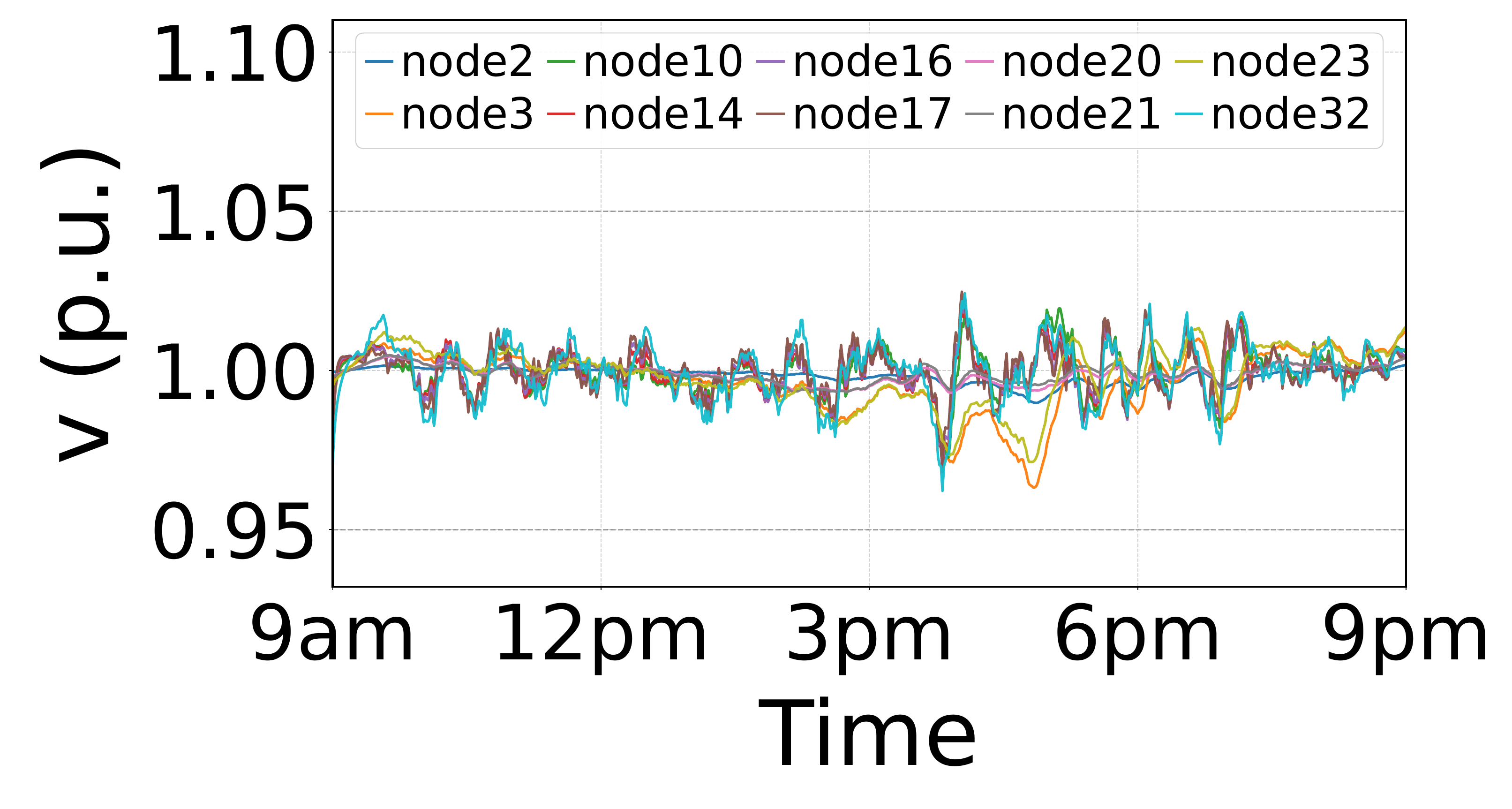}
  \end{minipage}%
  \hspace{0.002\linewidth}%
  \begin{minipage}[t]{0.499\linewidth}
    \centering
    \includegraphics[width=\linewidth]{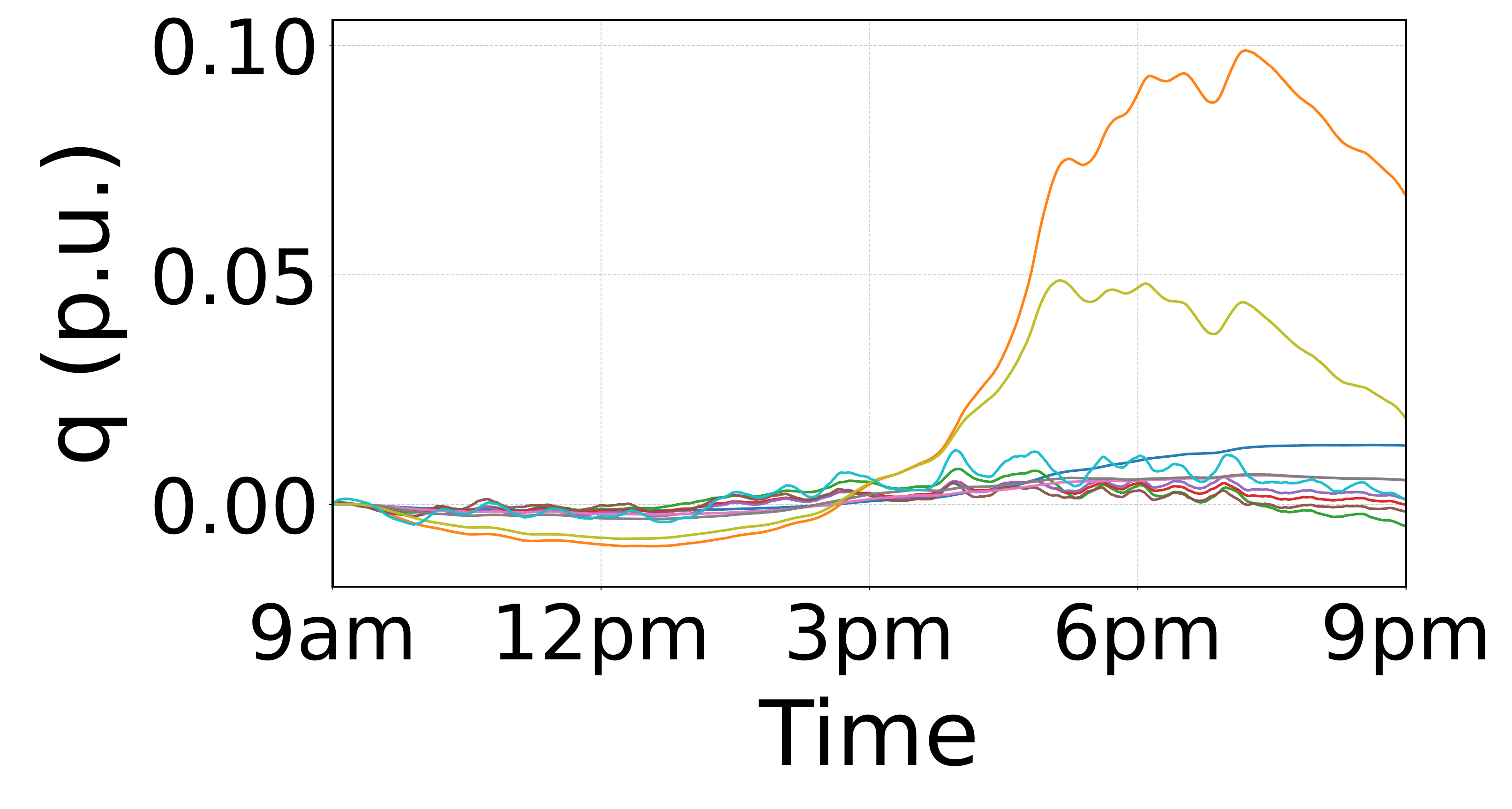}
  \end{minipage}\\[-0.5em]
  {\footnotesize (a) Proposed controller with decentralized gain update.}\\[0.7em]

  \begin{minipage}[t]{0.499\linewidth}
    \centering
    \includegraphics[width=\linewidth]{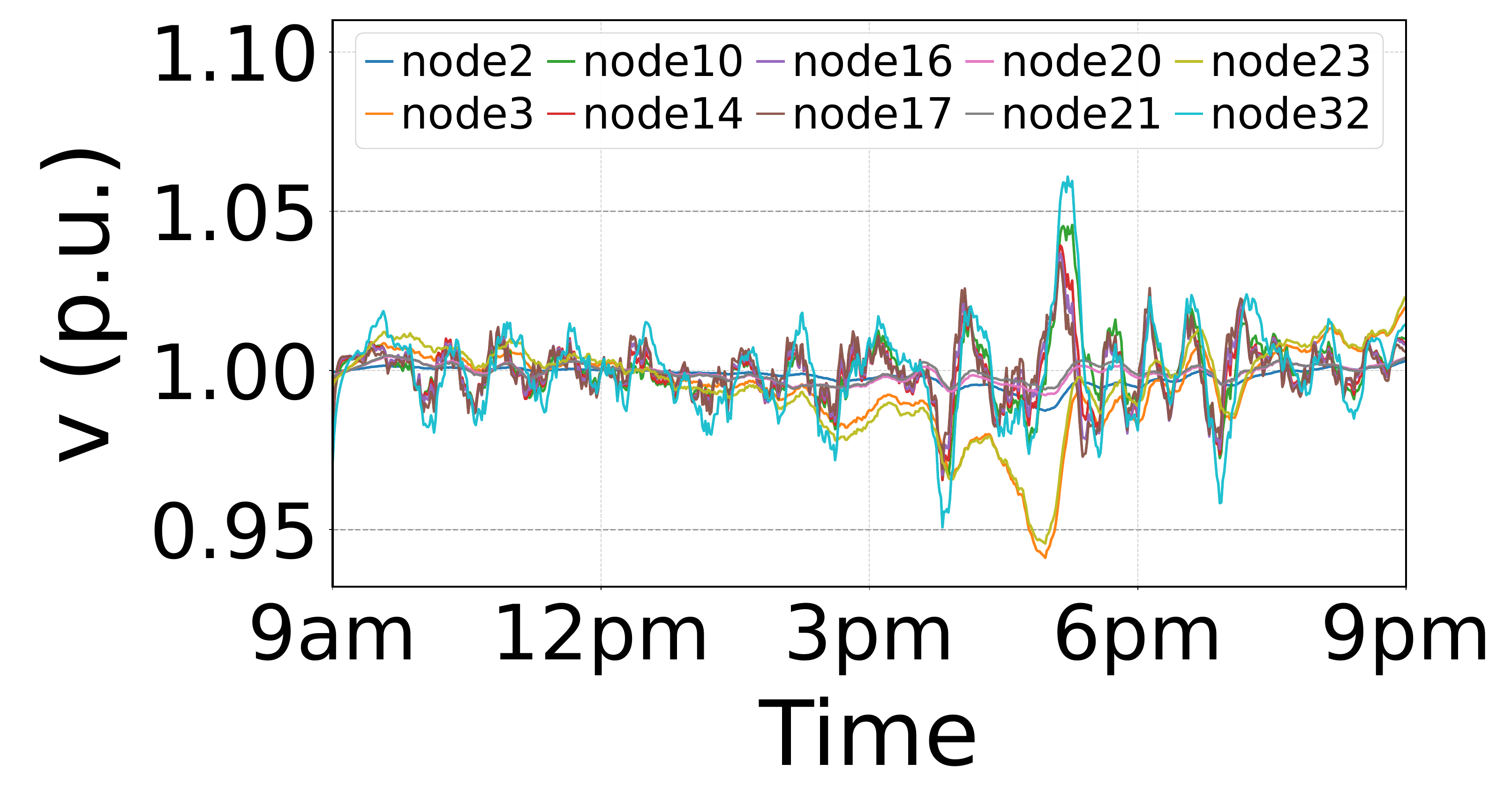}
  \end{minipage}%
  \hspace{0.002\linewidth}%
  \begin{minipage}[t]{0.499\linewidth}
    \centering
    \includegraphics[width=\linewidth]{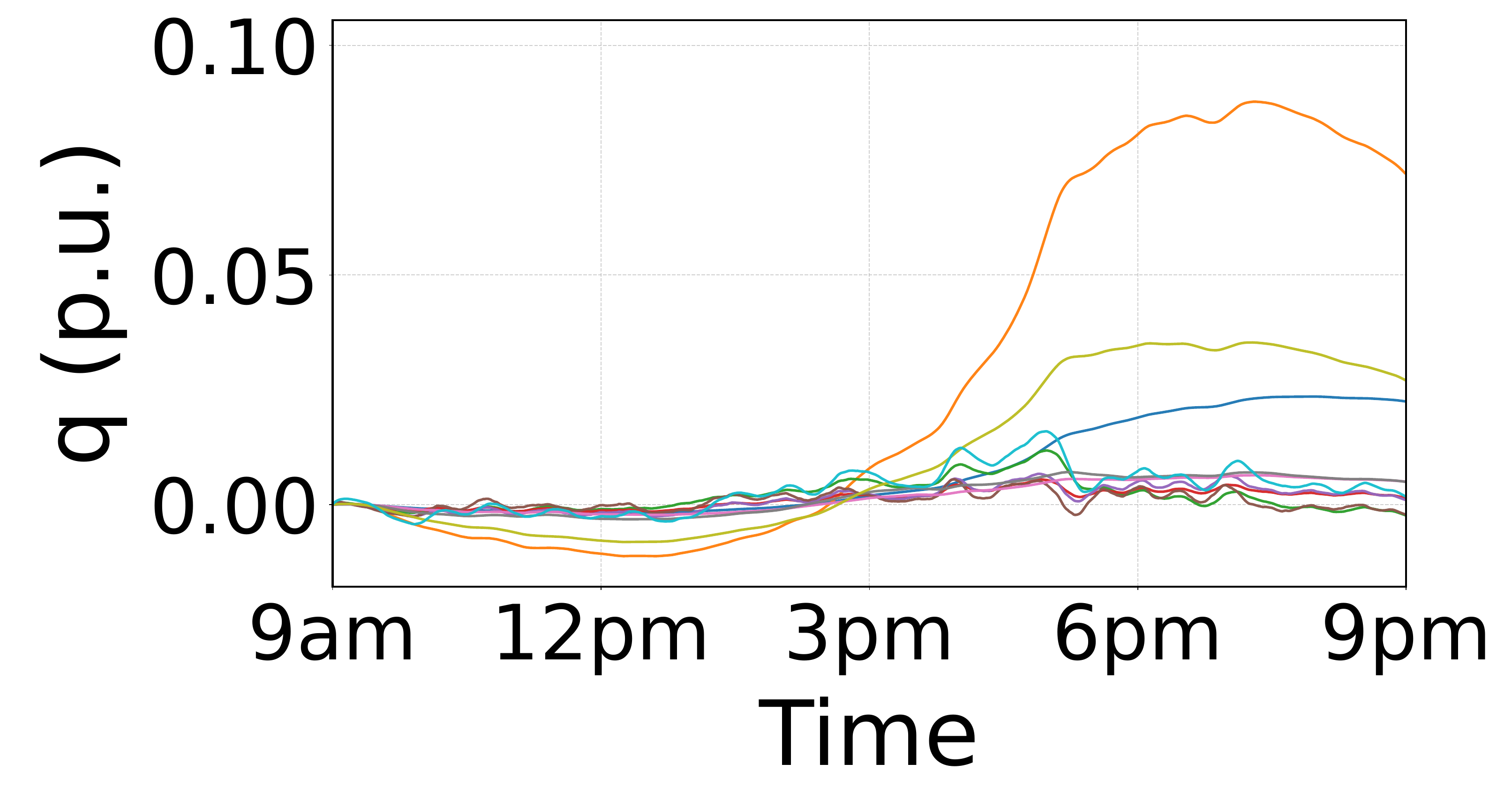}
  \end{minipage}\\[-0.5em]
  {\footnotesize (b) Fixed-gain linear controller.}

  \caption{Voltage trajectories (left) and reactive power actions (right) at 10 buses under the proposed controller and the fixed-gain linear controller.}
  \label{fig:varying_voltage_control}
  \vspace{-0.5em}
\end{figure}

\begin{figure}[t]
  \centering
  \includegraphics[width=0.9\linewidth]{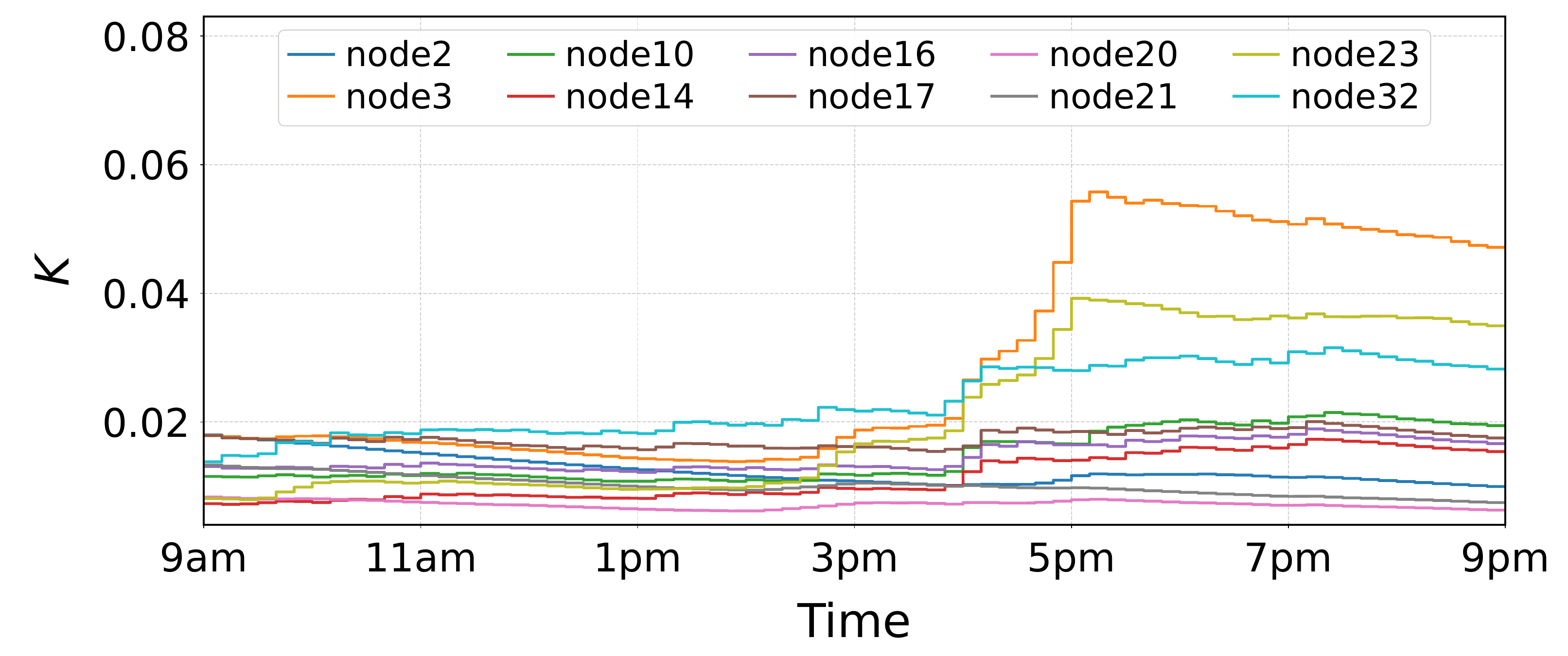}
    \vspace{-0.8em}
  \caption{Gain trajectories at 10 buses under the proposed gain update method.}

\label{fig:varying_gain}  \vspace{-1em}
\end{figure}
We evaluate the proposed method on the IEEE
33-bus test feeder~\cite{baran1989network}. All electrical
quantities are expressed in per unit on a 100~MVA, 12.66~kV base, and the
voltage reference is set to $1$~p.u. Although the controller is designed under
the LinDistFlow model, the case study is simulated using the actual nonlinear
DistFlow equations~\cite{chiang2002existence}. The proposed controller with decentralized gain update is compared with a fixed-gain linear controller that uses the same decentralized structure in \eqref{eq:incremental_controller}. The only difference is that the proposed controller updates the feedback gains online, whereas the baseline keeps the gains fixed. The code for this work is publicly available at \url{https://github.com/nyudyw/Decentralized-Gain-Update}.

To evaluate the voltage regulation performance of the proposed method under changing
operating conditions, we consider time-varying load and generation profiles
constructed from the Caltech SoCal distribution grid digital twin
dataset~\cite{10.1145/3744255.3811717}. The dataset was collected from an
operational distribution system and captures heterogeneous
generation and demand associated with solar photovoltaic generation, data
centers, electric vehicle charging, and cooling facilities. In the
experiment, voltage measurements are collected and reactive power commands are
updated every minute.

Fig.~\ref{fig:varying_voltage_control} compares the closed-loop behaviors of
the two controllers. The proposed controller with decentralized gain update generally maintains bus voltages
close to $1$~p.u., as shown in Fig.~\ref{fig:varying_voltage_control} (a). When the load increases substantially after 3 pm, the operating
condition departs from the regime for which the initial gains are well suited.
The proposed method responds by increasing local feedback gains, as
shown in Fig.~\ref{fig:varying_gain}, and rapidly adjusting the reactive power
outputs. As a result, all bus voltages remain within the
$0.95$--$1.05$~p.u. limits throughout the simulated period. In contrast, the
fixed-gain linear controller retains the same gain and responds less effectively to changes in power injections, resulting
in larger voltage deviations and intermittent voltage violations, as in Fig.~\ref{fig:varying_voltage_control} (b).
On average, the proposed controller with gain update reduces the voltage deviation by approximately
$30.93\%$ relative to the fixed-gain linear controller, while increasing
the reactive power magnitude by only $8.47\%$.

We further test the performance of two controllers using power profiles from
30 different days. The result is summarized in Table~\ref{tab:30day_controller_comparison}.
 For mean daily cumulative cost, we sum the costs over each day’s simulation period using the same system-level cost structure as in \eqref{eq:regularized_objective_main} and then average the daily totals across 30 days. The proposed controller with decentralized gain update achieves a 40.66\% lower mean daily
cost than the fixed-gain linear controller.
It also reduces the average daily voltage violation (voltage outside $[0.95,1.05]$~p.u.) duration
from 20.77 to 0.90~minutes, corresponding to a 95.67\% reduction. These results demonstrate that the proposed controller with gain update retains the decentralized
implementation advantages of local control while improving its ability to
rapidly adapt to changing operating conditions.

\begin{table}[t]
    \centering
    \caption{Voltage regulation performance across 30 days.}
    \label{tab:30day_controller_comparison}
    \begin{tabular}{lcc}
        \toprule
        Metric
        & \shortstack{Fixed-gain\\linear control}
        & \shortstack{Proposed\\method} \\
        \midrule
        Mean daily cumulative cost
        & 1.035 & \textbf{0.614} \\
        Mean daily violation duration (min)
        & 20.77 & \textbf{0.90} \\
        \bottomrule
    \end{tabular}
    \vspace{-1.5em}
\end{table}

\section{Conclusion}
\label{sec:conclusion}

This paper develops a decentralized gain update algorithm for local
Volt/VAR control. By exploiting the structure of distribution flow equations, the gradient of a system-level voltage regulation objective with respect to the feedback gain is decomposed into locally computable components, enabling projected gradient descent without access to network models or communication among buses. The proposed controller remains fully decentralized while adapting online to
changing operating conditions. Numerical studies demonstrate its effectiveness for voltage regulation under time
varying load and photovoltaic generation. Further directions include rigorous analysis of the performance guarantee with time-varying loads and measurement noise, as well as extension of the proposed approach to more general cost function and nonlinear system models. 

\appendix

\subsection{Proof of Theorem~\ref{thm:descent_stationarity}}
\label{app:descent_stationarity}


Denote $\kvec^+
    :=
    T_\alpha(\kvec)
    =
    \proj_{\calD}
    \left(
        \kvec-\alpha\nabla J_\lambda(\kvec)
    \right)$ for notational simplicity. The optimality condition of Euclidean projection gives $\left[
        \kvec
        -
        \alpha\nabla J_\lambda(\kvec)
        -
        \kvec^+
    \right]^\top
    \left(
        \kvec-\kvec^+
    \right)
    \leq0.$ Using $\kvec-\kvec^+
    =
    \alpha G_\alpha(\kvec)$, the projection condition becomes
\begin{equation}
\label{eq:projection_gradient_inequality}
    \nabla J_\lambda(\kvec)^\top
    G_\alpha(\kvec)
    \geq
    \left\|
        G_\alpha(\kvec)
    \right\|^2.
\end{equation}
Since $\nabla J_\lambda$ is $L_\lambda$-Lipschitz continuous on $\calD$,
\begin{equation*}
    J_\lambda(\kvec^+)
    \leq
    J_\lambda(\kvec)
    +
    \nabla J_\lambda(\kvec)^\top
    (\kvec^+-\kvec)
    +
    \frac{L_\lambda}{2}
    \left\|
        \kvec^+-\kvec
    \right\|^2.
\end{equation*}
Substituting
$\kvec^+-\kvec=-\alpha G_\alpha(\kvec)$ and applying
\eqref{eq:projection_gradient_inequality} yield
\begin{equation*}
\begin{aligned}
    J_\lambda(\kvec^+)
    &\leq
    J_\lambda(\kvec)
    -
    \alpha
    \nabla J_\lambda(\kvec)^\top
    G_\alpha(\kvec)
    +
    \frac{L_\lambda\alpha^2}{2}
    \left\|
        G_\alpha(\kvec)
    \right\|^2
    \\
    &\leq
    J_\lambda(\kvec)
    -
    \alpha
    \left(
        1-\frac{L_\lambda\alpha}{2}
    \right)
    \left\|
        G_\alpha(\kvec)
    \right\|^2.
\end{aligned}
\end{equation*}
Applying this inequality at
$\kvec=\kvec^{(m)}$ proves
\eqref{eq:descent_bound}.

Let $c_\alpha
    :=
    \alpha
    \left(
        1-\frac{L_\lambda\alpha}{2}
    \right)$. The step size condition implies $c_\alpha>0$. Summing
\eqref{eq:descent_bound} from $m=0$ to $M-1$ gives $c_\alpha
    \sum_{m=0}^{M-1}
    \left\|
        G_\alpha
        \left(
            \kvec^{(m)}
        \right)
    \right\|^2
    \leq
    J_\lambda
    \left(
        \kvec^{(0)}
    \right)
    -
    J_\lambda
    \left(
        \kvec^{(M)}
    \right)$. Since $J_\lambda$ is continuous on the compact set $\calD$, it is
bounded on $\calD$. Letting $M\rightarrow\infty$ therefore
yields $ \sum_{m=0}^{\infty}
    \left\|
        G_\alpha
        \left(
            \kvec^{(m)}
        \right)
    \right\|^2
    <
    \infty.$
Hence, $\left\|
        G_\alpha
        \left(
            \kvec^{(m)}
        \right)
    \right\|
    \rightarrow0$, which proves \eqref{eq:pg_mapping_convergence}.

Compactness of $\calD$ guarantees that the gain sequence has
limit points. Let
$\{\kvec^{(m_j)}\}$ be a subsequence converging to a limit
point $\kvec^\star$. Continuity of the projected gradient mapping
implies $G_\alpha(\kvec^\star)
    =
    \lim_{j\rightarrow\infty}
    G_\alpha
    \left(
        \kvec^{(m_j)}
    \right)
    =
    0.$ The condition
$G_\alpha(\kvec^\star)=0$ is equivalent to $\kvec^\star
    =
    \proj_{\calD}
    \left(
        \kvec^\star
        -
        \alpha\nabla J_\lambda(\kvec^\star)
    \right)$. By the optimality condition of Euclidean projection, this means $\nabla J_\lambda(\kvec^\star)^\top
    \left(
        \kvec-\kvec^\star
    \right)
    \geq 0, \forall\,\kvec\in\calD$, which is the first order stationarity condition for minimizing
$J_\lambda$ over $\calD$. Therefore, every limit point of the gain
sequence is a first order stationary point.

\bibliographystyle{IEEEtran}
\bibliography{references}

\end{document}